\newenvironment{proof}{\paragraph{Proof:}}{\hspace*{\fill}\(\Box\)}
\newtheorem{theorem}{Theorem}
\newtheorem{lemma}{Lemma}
\def\noflash#1{\setbox0=\hbox{#1}\hbox to 1\wd0{\hfill}}
\newcommand{\comment}[1]{}
\newcommand{\nocomment}[1]{}
\newcommand{\Iitemize}{\begin{itemize}
	{\setlength{\itemsep}{-6pt}}
       }
\newcommand{\ls}[1]
   {\dimen0=\fontdimen6\the\font 
    \lineskip=#1\dimen0
    \advance\lineskip.5\fontdimen5\the\font
    \advance\lineskip-\dimen0
    \lineskiplimit=.9\lineskip
    \baselineskip=\lineskip
    \advance\baselineskip\dimen0
    \normallineskip\lineskip
    \normallineskiplimit\lineskiplimit
    \normalbaselineskip\baselineskip
    \ignorespaces
   }
\def\ifundefined#1{\expandafter\ifx\csname#1\endcsname\relax}
\newcommand{\eqref}[1]{Equation~\ref{#1}}
\begin{document}

\title{Multiparty Equality Function Computation in\\ Networks with Point-to-Point Links\footnote{\normalsize This research is supported
in part by Army Research Office grant
W-911-NF-0710287 and National Science Foundation award 1059540. Any opinions, findings, and conclusions or recommendations expressed here are those of the authors and do not
necessarily reflect the views of the funding agencies or the U.S. 
government.}}

\date{\today}
\author{Guanfeng Liang and Nitin Vaidya\\ \normalsize Department of Electrical and Computer Engineering, and\\ \normalsize Coordinated Science Laboratory\\ \normalsize University of Illinois at Urbana-Champaign\\ \normalsize gliang2@illinois.edu, nhv@illinois.edu\\~\\Technical Report}

%


\maketitle


\thispagestyle{empty}

\newpage

\setcounter{page}{1}

\section{Introduction}\label{sec:intro}
In this report, we study the multiparty communication complexity problem of the multiparty equality function (MEQ):
\begin{equation}
EQ(x_1,\cdots,x_n) = \left\{ 
\begin{array}{ll}
0 & \textrm{if } x_1=\cdots=x_n\\
1 & \textrm{otherwise}.
\end{array}
\right.
\label{eq:EQ}
\end{equation}
 The input vector $x=(x_1,\cdots,x_n)$ is distributed among $n\ge 2$ nodes, with $x_i$ known to node $i$, where $x_i$ is chosen from the set $\{1,\cdots,M\}$, for some integer $M>0$.

\subsection{Communication Complexity}\label{subsec:cc}
The notion of communication complexity (CC) was introduced by Yao in 1979 \cite{Yao'79}, who investigated the following problem involving two separated parties (Alice and Bob) want to mutually compute a Boolean function that is defined on pairs of inputs. Formally, let $f:X\times Y \mapsto \{0,1\}$ be a Boolean function. The communication problem for $f$ is the following two-party game:

Alice receives $x\in X$ and Bob receives $y\in Y$, and the goal is for them to compute $f(x,y)$, collaboratively. Alice and Bob have unlimited computational power and a full description of $f$, but they do not know each other's input. They determine the output value by exchanging messages. The computation ends when either Alice or Bob has enough information to determine $f(x,y)$, and sends a special symbol ``halt'' to the other party.

A protocol $P$ for computing $f$ is an algorithm, according to which Alice and Bob send binary messages to each other. A protocol proceeds in rounds. In every round, the protocol specifies whose turn it is to send a message. Each party in his/her turn sends one bit that may depend on his/her input and the previous messages he/she has received. A correct protocol for $f$ should terminate for every input pair $(x,y)\in X\times Y$, when either Alice or Bob knows $f(x,y)$.

The communication complexity of a protocol $P$ is the number of bits exchanged for the worst case input pair. The communication complexity of a Boolean function $f:X\times Y \mapsto \{0,1\}$, is that of the protocols for $f$ with the least complexity.

\subsection{Multiparty Communication Complexity}\label{subsec:Mcc}
There is more than one way to generalize communication complexity to a multiparty setting. The most commonly used model is the ``number on the forehead'' model introduced in \cite{multi-partyprotocols}. Formally, there is some function $f : \Pi _{i=1}^n X_i \mapsto \{0, 1\}$, and the input
is $(x_1, x_2,\cdots, x_n)$ where each $x_i \in X_i$. The $i$-th party can see all the $x_j$ such that $j \neq i$. As
in the 2-party case, the $n$ players have an agreed-upon protocol for communication, and all this
communication is posted on a ``public blackboard''. At the end of the protocol all parties must know $f(x_1,\cdots,x_n)$. In this model, the communications may be considered as being {\em broadcast} using the public blackboard. Tight lower bounds (at least up to constant factors) often follow from considering two-way partitions of the set of parties.

\section{Models}\label{sec:model}

\subsection{Communication Model}
Instead of the ``number on the forehand'' model, we consider a point-to-point  communication model (similar to the message passing model), which we believe is more realistic in networking settings.
We assume a synchronous fully connected network of $n$ nodes, the node IDs (identifiers) are common knowledge. We assume that all point-to-point communication channels/links are {\em private} such that when a node transmits, only the designated recipient can receive the message. The identity of the sender is known to the recipient.

\subsection{Protocol}
A protocol $P$ is a sequence/schedule of transmissions and computations $\{\alpha_1\alpha_2\cdots\alpha_{L(P)}\}$. Here $\alpha_l = (T_l,R_l,f_l(x_{T_l}, T_l^+(l)))$ specifies that in the $l$-th step, node $T_l$ sends a channel symbol as a function $f_l(x_{T_l},T_l^+(l))$ to node $R_l$, with $T_l^+(l)$ denoting all the channel symbols party $T_l$ has received up to step $l-1$. 
In the rest of the paper, we will use $T_l^+$ as $T_l^+(l)$ when it is clear from the context. $L(P)$ is  the length of protocol $P$. The communication complexity of protocol $P$ is defined as
\begin{equation}
C(P) = \sum_{l=1}^{L(P)}\log_2 S_l(P),
\end{equation}
where $S_l(P)$ is the size of the {\em range} of $f_l(x_{T_l},T_l^+(l))$, i.e., the number of possible channel symbols needed in step $l$ of protocol $P$, considering all possible inputs. If only binary symbols are allowed, the communication complexity simply becomes $L(P)$.

\section{Problem Definitions}\label{sec:problem}
In this paper, we study the communication complexity of computing the MEQ function in a distributed manner. We consider two versions of the problem, which differ in where the MEQ function is being computed.

\subsection{MEQ-AD: Anyone Detects}
We first consider the protocols that terminates whenever one of the nodes detects a {\em mismatch}. Precisely speaking, a protocol $P$ is said to solve the MEQ-AD problem {\em deterministically} if by the end of the $L(P)$-th step, every node computes its own decision $EQ_i\in\{0,1\}$ such that
\begin{equation}
EQ_1=\cdots EQ_n=0 \Leftrightarrow EQ(x_1,\cdots,x_n)=0.
\end{equation} 
We will say node $i$ detects a mismatch if it sets $EQ_i=1$. In other words, if the inputs to the $n$ parties are not identical, there must be {\bf at least one} node that detects a mismatch.

\subsection{MEQ-CD: Centralized Detect}
The second class of protocols we consider are the ones in which a particular node is assigned to perform mismatch detection. Without loss of generality, we can assume that node $n$ has to perform detection. Then a protocol $P$ is said to solve the MEQ-AD problem if by the end of the $L(P)$-th step, node $n$ computes decision $EQ_n$, which is 
\begin{equation}
EQ_n= EQ(x_1,\cdots,x_n).
\end{equation} 

\subsection{Goal}
Denote $\Gamma_{AD}(n,M)$ and $\Gamma_{CD}(n,M)$ as the set of all protocols that solve the MEQ-AD and MEQ-CD problem with $n$ nodes, each of which is given an input value $x_i\in\{1,\cdots,M\}$, respectively. We are interested in finding the  communication complexity of the both problems, which are defined as
\begin{eqnarray}
C_{AD}(n,M) = \inf_{P\in \Gamma_{AD}(n,M)}C(P),\\
C_{CD}(n,M) = \inf_{P\in \Gamma_{CD}(n,M)}C(P).
\end{eqnarray}
It is worth pointing out that $C_{AD}(n,M)\le C_{CD}(n,M)$, since any protocol that solves the MEQ-CD problem solves the MEQ-AD problem with the same $n$ and $M$ as well.
In the rest of this report, we will mainly focus on the MEQ-AD problem, except for Section \ref{sec:MEQ-CD}, in which we discuss the MEQ-CD problem.

\section{Upper Bound of the Complexity}\label{sec:upper}
An upper bound of the communication complexity of both versions of the MEQ problem is $ (n-1)\log_2 M$, for all positive integer $n\ge 2$ and $M\ge 1$.
This can be proved by a trivial construction: in step $i$, node $i$ sends $x_{i}$ to node $n$, for all $i<n$. The decisions are computed according to
\begin{equation}
EQ_i = \left\{ 
\begin{array}{ll}
EQ(x_1,\cdots,x_n) & , i=n;\\
0 &  ,i<n.
\end{array}
\right.
\end{equation}
It is obvious that this protocol solves both the MEQ-AD and MEQ-CD problems with communication complexity $(n-1)\log_2 M$, which implies $C_{AD(CD)}(n,M)\le (n-1)\log_2 M$. In particular, when $M=2^k$, we have $C_{AD(CD)}(n,2^k)\le (n-1)k$. 


\section{Loose Lower Bound of Complexity using Traditional Techniques}\label{sec:lower}
In most of the existing literature on multiparty communication complexity, the ``number on the forehand'' model or a broadcast communication model is usually assumed. Under these models, when a node transmits, all other nodes receive the same message. This broadcasting property makes it possible to consider two-way partitions of the set of nodes since the nodes in each partition shares the same information being broadcast and can be viewed as one virtual node. Thus results from two-party communication complexity can be extended to the multiparty case, and tight bounds (rather than just capturing the order) can then be obtained. 

However, the above technique no longer works well in obtaining tight bounds under our point-to-point communication model. For example, the complexity of the two-party EQ problem of $k$-bit inputs can be proved to be $k$ with the ``fooling set'' argument: Suppose in contradiction that there exists a protocol of  complexity at most $C(P)<k$ that solves the two-party EQ problem. Then there are at most $2^{C(P)} \le 2^k-1$ communication patterns possible between the two nodes. Consider all sets of $2^k$ pairs of input values $(x,x)$. Using the pigeonhole principle we conclude there exist two
pairs $(x, x)$ and $(x', x')$ on which the communication patterns are the same. It is easy to see that the communication pattern of $(x, x')$ is also the same as $(x,x)$. Hence, the nodes' final decisions on $(x,x)$ must agree with their decisions on $(x, x')$. But then the protocol must be incorrect, since $EQ(x,x')=1\neq EQ(x,x)$.

The ``fooling set'' argument above can be extended to the case with $n>2$ nodes and arbitrary $M\ge 1$: partition the $n$ nodes into two sets (say L and R), there must be at least $M$ patterns of communication between the two sets L and R. By applying this argument to all possible two-partitions such that $|L|=1$ and $|R|=n-1$, we can obtain a lower bound on the communication complexity as 
\begin{equation}
C_{AD(CD)}(n,M)\ge \frac{n}{2}\log_2 M.
\end{equation}
This lower bound is within a factor of $1/2$ of the upper bound we obtain previously, which implies that $C_{AD(CD)}(n,M) = \Theta(n\log_2 M)$. However, we can show that the lower bound of $\frac{n}{2}\log_2 M$ is generally not achievable. An example for this is the MEQ(3,4) problem. It can be shown that $C_{AD}(3,4) = C_{CD}(3,4) = 4$, while $\frac{n}{2}\log_2 M=3$. Details can be found in Appendix \ref{app:edge-coloring}.

The example above has demonstrated that, under our point-to-point communication model, we can no longer extend results from two-party communication complexity to multiparty version for tight bounds in the way it has been done under the broadcast communication models. The main reason for this is the lack of modeling of the ``networking'' aspect of the problem in both the two-party model and the broadcast communication models. In the two-party model, since there are only two nodes, no networking is necessary. In the broadcast communication models, all the nodes share a lot information from the broadcast and have roughly the same view of system, which makes it a not-so-distributed network. On the other hand, under our point-to-point communication modes, each node may only receive information from a subset of nodes; it is even possible that two nodes may receive information from two disjoint sets of nodes. As a result, different nodes can have very different views of the system. This makes the problem of finding the tight bound of communication complexity difficult, and new techniques may be required.

\section{Equivalent MEQ-AD Protocols}
\label{sec:eq_protocol}

In this section, we considers protocols $\in \Gamma_{AD}(n,M)$. A protocol $P$ is interpreted as a directed multi-graph $G(V,E)$, where the set of vertices $V=\{1,\cdots,n\}$ represents the $n$ nodes, and the set of directed edges $E=\{(T_1,R_1),\cdots,(T_{L(P)},R_{L(P)})\}$ represents the transmission schedule in each step. From now on, we will use the terms protocol and graph interchangeably, as well as the terms transmission and link.

We will say that a protocol $P$ is not better than protocol $P'$ if $C(P)\ge C(P')$. Two protocols $P$ and $P'$ are said to be {\bf equivalent} if they are not better than each other. The following lemma says that we can flip the direction of any link in a protocol $P$ and obtain a protocol that is equivalent to $P$.

\begin{lemma}\label{lm:flip_direction}
A protocol $P=\{\alpha_1\cdots\alpha_{l-1}\alpha_l\alpha_{l+1}\cdots\}$ with $\alpha_l=(T_l,R_l,f_l(x_{T_l},T_l^+))$ is equivalent to $P'=\{\alpha_1\cdots\alpha_{l-1}\alpha_l'\alpha_{l+1}'\cdots\}$ if the following conditions are all satisfied:
\begin{itemize}
\item $\alpha_l' = (R_l, T_l, f_l'(x_{R_l},R_l^+))$. Here $f_l'(x_{R_l},R_l^+)=f_l(x_{T_l},T_l^+)|_{x_1=\cdots=x_n=x_{R_l}}$ is the symbol that party $R_l$ {\em expects} to receive in step $l$ of protocol $P$, assuming all parties have the same input as $x_{R_l}$.

\item $\alpha_m' = (T_m,R_m,f_m'(x_{T_m},T_m^+))$ for $m>l$. 
\begin{itemize}
\item If $T_m= R_l$, $f_m'(x_{T_m},T_m^+)=f_m(x_{T_m},T_m^+)|_{f_l(x_{T_l},T_l^+)=f_l'(x_{R_l},R_l^+)}$  is the symbol that party $R_l$ sends in step $m$, {\bf pretending} that it has received $f_l'(x_{R_l},R_l^+)$ in step $l$ of $P$.
\item  If $T_m\neq R_l$, $f_m'(x_{T_m},T_m^+) = f_m(x_{T_m},T_m^+)$.
\end{itemize}

\item $T_l$ first computes $EQ_{T_l}$ in the same way as in $P$. Then $T_l$ sets $EQ_{T_l}=1$ if $f_l'(x_{R_l},R_l^+)\neq f_l(x_{T_l},T_l^+)$, else no change.
\end{itemize}
\end{lemma}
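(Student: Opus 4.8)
The plan is to establish the two properties that make $P'$ equivalent to $P$: that $P'$ is a legitimate MEQ-AD protocol, i.e. $P'\in\Gamma_{AD}(n,M)$, and that $C(P')=C(P)$. That $P'$ is a well-defined protocol is immediate: the steps $\alpha_1,\dots,\alpha_{l-1}$ are untouched, so in step $l$ the node $R_l$ can compute $f_l'(x_{R_l},R_l^+)$ from its input and the history it actually holds, the node $T_l$ can recompute the symbol $f_l(x_{T_l},T_l^+)$ it would have transmitted in $P$ (its history through step $l-1$ is the same in both protocols) and hence can perform the extra test, and every later transmitter is either unaffected or is $R_l$ acting on its ``pretend'' history. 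So the content is in correctness and in the complexity count.

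For correctness I would fix an arbitrary input vector $x=(x_1,\dots,x_n)$ and run $P$ and $P'$ side by side. Since $\alpha_1,\dots,\alpha_{l-1}$ coincide, the two executions are identical through step $l-1$; in particular $T_l$ and $R_l$ enter step $l$ holding the same histories $T_l^+$ and $R_l^+$ in both runs. Let $a=f_l(x_{T_l},T_l^+)$ be the symbol $T_l$ sends in $P$ and $b=f_l'(x_{R_l},R_l^+)$ the symbol $R_l$ sends in $P'$. The whole argument turns on whether $a=b$.

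If $a=b$, I would show by induction on $m\ge l+1$ that the execution of $P'$ on $x$ mirrors that of $P$ on $x$: the ``pretend'' history $R_l$ uses in $P'$ (its step $1,\dots,l-1$ history with $f_l'=b$ appended for step $l$) coincides with the real history $R_l$ holds in $P$ (where it genuinely received $a=b$ in step $l$), so $f_m'$ evaluates to $f_m$ at every later step, every node receives in $P'$ exactly the symbols it receives in $P$, the ``same way as in $P$'' part of $T_l$'s computation is unchanged, and $T_l$'s extra test does not fire because $a=b$; hence $EQ_i^{P'}=EQ_i^{P}$ for every $i$. If instead $a\neq b$, then $T_l$'s extra test fires, so $EQ_{T_l}^{P'}=1$; and this case cannot arise when $x_1=\cdots=x_n=v$, because then $R_l^+$ and $T_l^+$ are exactly the histories of the all-$v$ execution of $P$, so by the definition of $f_l'$ the symbol $b=f_l'(v,R_l^+)$ equals the symbol $T_l$ sends in that execution, namely $a$. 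Combining the cases: if $x$ is constant we are necessarily in the case $a=b$, so $EQ_i^{P'}=EQ_i^{P}=0$ for all $i$; and if $x$ is not constant, then either $a\neq b$ and $T_l$ detects, or $a=b$ and some $j$ with $EQ_j^{P}=1$ (one exists because $P$ is correct) also has $EQ_j^{P'}=1$. In either case $P'$ solves MEQ-AD, i.e. $P'\in\Gamma_{AD}(n,M)$.

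For the complexity I would compare the protocols step by step. Steps $m<l$ are identical, so $S_m(P')=S_m(P)$. Every value taken by $f_l'$ is $f_l$ evaluated in some all-equal execution of $P$, hence lies in the range of $f_l$, so $S_l(P')\le S_l(P)$; likewise, for every later step $m>l$ with $T_m=R_l$, each value of $f_m'$ is a value of $f_m$ (with the step-$l$ entry of $T_m$'s history pinned to a single symbol), so $S_m(P')\le S_m(P)$, while steps with $T_m\neq R_l$ are unchanged. Summing $\log_2$ over all steps gives $C(P')\le C(P)$, which is the inequality that matters for the applications (restricting attention to protocols whose links point in a prescribed direction); in particular, whenever $P$ is optimal this forces $C(P')=C(P)$ since $C(P')\ge C_{AD}(n,M)$. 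I expect the main obstacle to be the induction in the $a=b$ case: one must check carefully that ``pretending to have received $f_l'$'' genuinely reproduces $R_l$'s entire downstream behaviour in $P$, and that the extra symbol $T_l$ now receives in step $l$ does not perturb its decision or its later transmissions beyond the single explicit extra test — both rest on steps $1,\dots,l-1$ being literally unchanged.
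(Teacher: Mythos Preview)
Your argument follows the same two-case split as the paper's own proof (whether the ``expected'' symbol $b=f_l'$ coincides with the ``actual'' symbol $a=f_l$ on the given input), and your treatment of correctness is a more fleshed-out version of the paper's terse reasoning: in the $a=b$ case the executions agree step by step so all decisions match, and the $a\neq b$ case can only occur on a non-constant input, where setting $EQ_{T_l}=1$ is legitimate. On complexity you are in fact more careful than the paper, whose proof does not address the cost at all; you correctly note that the construction only yields $C(P')\le C(P)$ (each $f_m'$ has range contained in that of $f_m$), and that this one-sided inequality is exactly what the subsequent applications (Lemma~\ref{lm:ordered} and the reduction to ordered iid protocols) require.
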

\begin{proof}
There are two cases:
\begin{itemize}
\item $f_l'(x_{R_l},R_l^+)=f_l(x_{T_l},T_l^+)$: It is not hard to see that in this case, the execution of every step is identical in both $P$ and $P'$, except for step $l$. So for all $i\neq T_l$, $EQ_i$ is identical in both protocols. Since $f_l'(x_{R_l},R_l^+)=f_l(x_{T_l},T_l^+)$,  $EQ_{T_l}$ remains unchanged, so it is also identical in both protocols.

\item $f_l'(x_{R_l},R_l^+)\neq f_l(x_{T_l},T_l^+)$: Observe that these two functions are different only if the inputs are not all identical. So it is correct to set $EQ_{T_l}=1$.
\end{itemize}
\end{proof}

Let us denote all the symbols a node $i$ receives from and sends to the other nodes throughout  the execution of protocol $P$ as $i^+$ and $i^-$, respectively. It is obvious that $i^-$ can be written as a function $F_i(x_i,i^+)$, which is the union of $f_l(x_i,i^+(l))$. If a protocol $P$ satisfies  $F_i(x_i,i^+) = F_i(x_i)$ for all $i$, we say $P$ is individual-input-determined (iid). The following lemma shows that there is always an iid equivalent for every protocol.

\begin{lemma}\label{lm:ordered}
For every protocol $P$, there always exists an iid equivalent protocol $P^*$, which corresponds to a  partially ordered acyclic graph.
\end{lemma}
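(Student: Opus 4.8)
The plan is to use Lemma~\ref{lm:flip_direction} repeatedly to transform an arbitrary protocol $P$ into an equivalent one that is both ``partially ordered acyclic'' and iid. I think of the graph $G(V,E)$ associated with $P$: a link $\alpha_l=(T_l,R_l,\cdot)$ is ``redundant'' for the sender's outgoing decision if the symbol $T_l$ sends in step $l$ does not actually depend on anything $T_l$ has received so far (only on $x_{T_l}$); the issue is that a link can carry information backward that then propagates forward along later links, spoiling the iid property. The idea is to peel links off from the front of the schedule: look at the first step $l$ in which the transmitted symbol $f_l(x_{T_l},T_l^+)$ genuinely depends on $T_l^+$ rather than on $x_{T_l}$ alone. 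Before that step, every transmission is already a function of the sender's own input only.

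First I would argue that whenever step $l$ is the earliest ``input-dependent'' step in the above sense, the sender $T_l$ has so far received only symbols that are functions of individual inputs, and in particular $T_l^+$ is determined by the inputs of the nodes that have sent to $T_l$; so $f_l(x_{T_l},T_l^+)$ is really a function of a \emph{set} of inputs, and flipping link $l$ via Lemma~\ref{lm:flip_direction} replaces it with a link whose symbol $f_l'(x_{R_l},R_l^+)$ — being $f_l$ evaluated at $x_1=\cdots=x_n=x_{R_l}$ — is a function of $x_{R_l}$ alone. Thus the flip strictly reduces the number of input-dependent steps \emph{at position $l$ or earlier}, at the cost of possibly changing later steps; I would set up a potential function (e.g.\ the position of the first input-dependent step, or a lexicographic count of input-dependent steps ordered by position) and show each flip strictly increases it, so the process terminates. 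When it terminates, no step's transmitted symbol depends on received symbols, which is exactly the iid condition $F_i(x_i,i^+)=F_i(x_i)$.

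Next I would extract the acyclic, partially-ordered structure. Once the protocol is iid, each node's entire outgoing behavior $i^-=F_i(x_i)$ depends only on $x_i$, so the \emph{timing} of transmissions no longer matters for correctness: I can reorder the schedule freely. I would define a partial order on $V$ by declaring $j \prec i$ whenever some link goes from $j$ to $i$, and first remove cycles — if links form a directed cycle, every node on the cycle sends a symbol that depends only on its own input, so the cyclic dependency is vacuous and I can delete or re-route the offending links (again invoking Lemma~\ref{lm:flip_direction} to flip a back-edge, or simply noting that in an iid protocol a link from $j$ to $i$ and a link from $i$ to $j$ are independent and can be oriented consistently with a fixed topological order on node IDs). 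Having obtained an acyclic dependency relation, topologically sort it and reschedule all transmissions in that order; the resulting $P^*$ is iid and its graph is a partially ordered acyclic (multi-)graph, with $C(P^*)=C(P)$ since Lemma~\ref{lm:flip_direction} preserves complexity and reordering/deleting vacuous links does not increase it.

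The main obstacle I expect is the termination argument in the second paragraph: flipping link $l$ can \emph{rewrite} all later links with $T_m=R_l$ (they now ``pretend'' to have received $f_l'$), and a priori this could re-introduce input-dependence at positions $>l$, so a naive count of input-dependent steps need not decrease. The fix is to choose the potential carefully — track only input-dependence at positions $\le l$, or process steps strictly left-to-right so that once a prefix is ``cleaned'' it is never touched again — and to check that the rewrites Lemma~\ref{lm:flip_direction} performs on later steps never affect the already-processed prefix (they don't, since those steps have index $>l$). Making this bookkeeping precise, and confirming that the flipped/rewritten protocol still lies in $\Gamma_{AD}(n,M)$ at every intermediate stage, is where the real work lies; everything after reaching iid is essentially graph-theoretic reorganization.
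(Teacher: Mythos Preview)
Your approach is correct but inverts the order of the paper's two stages, and in doing so makes the argument harder than it needs to be. The paper first uses Lemma~\ref{lm:flip_direction} to orient the graph acyclically (flip edges so that, after a relabeling, every edge goes from a lower-indexed node to a higher-indexed one), reorders the schedule so that node~1 transmits all its symbols first, then node~2, and so on, and only \emph{then} makes the protocol iid --- by a single, non-iterative transformation: each node $i$ replaces its outgoing map $F_i(x_i,i^+)$ by $F_i'(x_i):=F_i\bigl(x_i,\,i^+|_{x_1=\cdots=x_n=x_i}\bigr)$ and sets $EQ_i=1$ whenever the actually received $i^+$ differs from this expected value. Because the topological order guarantees that $i^+$ is completely determined before node $i$ sends anything, this one-shot ``pretend'' substitution immediately yields an iid protocol; no induction over steps and no potential function are required.

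Your route (iid first via repeated single-edge flips, then acyclic) also works, and your termination argument is sound --- Lemma~\ref{lm:flip_direction} never alters steps of index $<l$, so the position of the first input-dependent step strictly increases --- and your observation that flipping an edge in an already-iid protocol preserves iid handles the second stage. What the paper's ordering buys is precisely that the bookkeeping you identify as ``the main obstacle'' evaporates: once the graph is acyclic and topologically scheduled, the iid property is obtained in one line rather than by an iterative cleanup.
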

\begin{proof} 
According to Lemma \ref{lm:flip_direction}, we can flip the direction of any edge in $G$ and obtain a new protocol which is equivalent to $P$. It is to be noted that we can keep flipping different edges in the graph, which implies that we can flip any subset of $E$ and obtain a new protocol equivalent to $P$. 

In particular, we consider a protocol equivalent to $P$, whose corresponding graph is acyclic, $i< j$ for all $(i,j)\in E$, with a proper re-labeling of the indexes. In this protocol, every node $i$ has no incoming links from any node with index greater than $i$. This implies that the symbols transmitted by node $i$ are independent of the nodes with larger indexes. Thus we can re-order the transmissions of this protocol such that node 1 transmits on all of its out-going links first, then node 2 transmits on all of its out-going links, ...,  node $n-1$ transmits to $n$ at the end. Name the new protocol $Q$. Obviously $Q$ is equivalent to $P$.

Since we can always find a protocol $Q$ equivalent to $P$ as described above, all we need to do now is to find $P^*$. If $Q$ itself is iid, then $P^*=Q$ and we are done. If not, we obtain $P^*$ in the following way (using function $F'$), which is similar to how we obtain the equivalent protocol $P'$ in Lemma \ref{lm:flip_direction}:
\begin{itemize}
\item For node 1, since it receives nothing from the other nodes, $F_1(x_1,1^+)=F_1(x_1)$ is trivially true.
\item For node $1<i<n$, we modify $Q$ as follows: node $i$ computes its out-going symbols as a function $F_i'(x_i) = F_i(x_i, i^+|_{x_1=\cdots = x_n=x_i})$, where $i^+|_{x_1=\cdots = x_n=x_i}$ is the expected incoming symbols to node $i$ assuming all parties have the same input as $i$. At the end, node $i$ checks if $i^+|_{x_1=\cdots = x_n=x_i}$ equals to the actual incoming symbols $i^+$. If they match, nothing is changed. If they do not match, the inputs can not be identical, and node $i$ can set $EQ_i = 1$. 
\end{itemize}
\end{proof}

Lemma \ref{lm:ordered} shows that, to study $C_{AD}(n,M)$, it is sufficient to investigate only protocols that are iid and partially ordered.

\section{MEQ-AD(3,6)}\label{sec:MEQ(3,6)}
Let us first consider MEQ-AD(3,6), i.e., the case where 3 nodes (say A, B and C) are trying to solve the MEQ-AD problem when each node is assigned one out of six values, namely $\{1,2,3,4,5,6\}$. According to Lemmas \ref{lm:flip_direction} and \ref{lm:ordered}, for any protocol that solves the MEQ-AD problem, there exists an equivalent iid partially ordered protocol in which node A has no incoming link, node B only transmits to node C, and node C has no out-going link. 
We construct one such protocol that solves MEQ-AD(3,6) and requires only 3 channel symbols, namely $\{1,2,3\}$, per link. The channel symbol being sent over link $ij$ is denoted as $s_{ij}$. Table \ref{tab:MEQ(3,6)} shows how $s_{ij}$ is computed as a function of $x_i$.

\begin{table}[h]
\centering
\begin{tabular}{|c||c|c|c|c|c|c|}
\hline
$x$ & 1 & 2 & 3 & 4 & 5 & 6\\
\hline
\hline
$s_{AB}$ & 1 & 1 & 2 & 2 & 3 & 3\\
\hline
$s_{AC}$ & 1 & 2 & 2 & 3 & 3 & 1\\
\hline
$s_{BC}$ & 1 & 2 & 3 & 1 & 2 & 3\\
\hline
\end{tabular}
\caption{A protocol for MEQ-AD(3,6)}
\label{tab:MEQ(3,6)}
\end{table}

For nodes B and C, they just compare the channel symbol received from each incoming link with the {\em expected} symbol computed with its own input value, and detect a mismatch if the received and expected symbols are not identical. For example, node B receives $s_{AB}(x_A)$ from node A. Then it detects a mismatch if the received $s_{AB}(x_A) \neq s_{AB}(x_B)$.

It can be easily verified that if the three input values are not all identical, at least one of nodes B and C will detect a mismatch. Hence the MEQ-AD(3,6) problem is solved with the proposed protocol. The communication complexity of this protocol is $3\log_2 3 = \log_2 27$. In fact, this protocol is optimal in the sense that it achieves the communication complexity of MEQ-AD(3,6):
\begin{equation}
C_{AD}(3,6) = \log_2 27.
\end{equation}
The proof of optimality can be found in Appendix \ref{app:edge-coloring}.

\section{MEQ-AD(3,$6^h$)}\label{sec:MEQ(3,$6^h$)}
Now consider the MEQ-AD(3,$6^h$) problem. One way to solve this problem is to extend the Table \ref{tab:MEQ(3,6)} as Table \ref{tab:MEQ(3,6^h)}. This protocol's communication complexity is $2\log_2 (6^h/2) + \log_2 3 = \log_2 (36^{h-1}\times 27)$.
\begin{table}[h]
\centering
\begin{tabular}{|c||c|c|c|c|c|c|c|c|c|c|c|}
\hline
$x$ & 1 & 2 & 3 & 4 & 5 & 6 & 7 & 8 & $\cdots$ & $6^h$-1 & $6^h$\\
\hline
\hline
$s_{AB}$ & 1 & 1 & 2 & 2 & 3 & 3 & 4 & 4 & $\cdots$ & $6^h/2$ & $6^h/2$\\
\hline
$s_{Ac}$ & 1 & 2 & 2 & 3 & 3 & 4 & 4 & 5 & $\cdots$ & $6^h/2$ & 1\\
\hline
$s_{Bc}$ & 1 & 2 & 3 & 1 & 2 & 3 & 1 & 2 & $\cdots$ & 2 & 3\\
\hline
\end{tabular}
\caption{A protocol for MEQ-AD(3,$6^h$)}
\label{tab:MEQ(3,6^h)}
\end{table}

 A more efficient way to solve this problem is to map each of the $6^h$ input values into a $h$-dimensional  vector in the vector space $\{1,2,3,4,5,6\}^h$, and then solve the MEQ-AD(3,6) problem $h$ times, one for each of the $h$ dimensions. Using the optimal MEQ-AD(3,6) protocol introduced in Section \ref{sec:MEQ(3,6)}, the communication complexity of this  protocol is $h\log_2 27= \log_2 27^h < \log_2 (36^{h-1}\times 27)$.

\section{MEQ-AD(3,$2^k$)}\label{sec:MEQ(3,$2^k$)}
Now we construct a protocol when the number of possible input values $M=2^k,k\ge 1$ and only binary symbols can be transmitted in each step, using the MEQ-AD(3,6) protocol we just introduced in the previous sections as a building block.

First, we map the $2^k$ input values into $2^k$ different vectors in the vector space $\{1,2,3,4,5,6\}^h$, where $h = \lceil k\log_6 2\rceil$. Then $h$ instances of the MEQ-AD(3,6) protocol are performed in parallel to compare the $h$ dimensions of the vector. Since 3 channels symbols are required for each instance of the MEQ-AD(3,6) protocol, we need to transmit a vector from $\{1,2,3\}^h$ on each of the links AB, AC and BC. One way to do so is to encode the $3^h$ possible vectors from $\{1,2,3\}^h$ into $b = \lceil h \log _2 3\rceil$ bits, and transmit the $b$ bits through the links. Since the $h$ instances of MEQ-AD(3,6) protocols solve the MEQ-AD(3,6) problem for each dimension, altogether they solve the MEQ-AD(3,$2^k$) problem.

The communication complexity of the proposed MEQ-AD(3,$2^k$) protocol can be easily computed as
\begin{eqnarray}
C(P) &=& 3\lceil h \log_2 3\rceil\\
&=& 3\lceil \lceil k\log_6 2\rceil \log_2 3\rceil \label{eq:ceiled}\\
&< & 3\lceil (k\log_6 2 +1) \log_2 3\rceil\\
&< & 3\left[ (k\log_6 2 +1) \log_2 3 +1\right]\\
&=& 3k\log_6 3 + 3(\log_2 3 + 1)\\
&<& 1.840k + 7.755 \label{eq:approx}
\end{eqnarray}

\begin{figure}[h]
\centering
\includegraphics[width = 4 in]{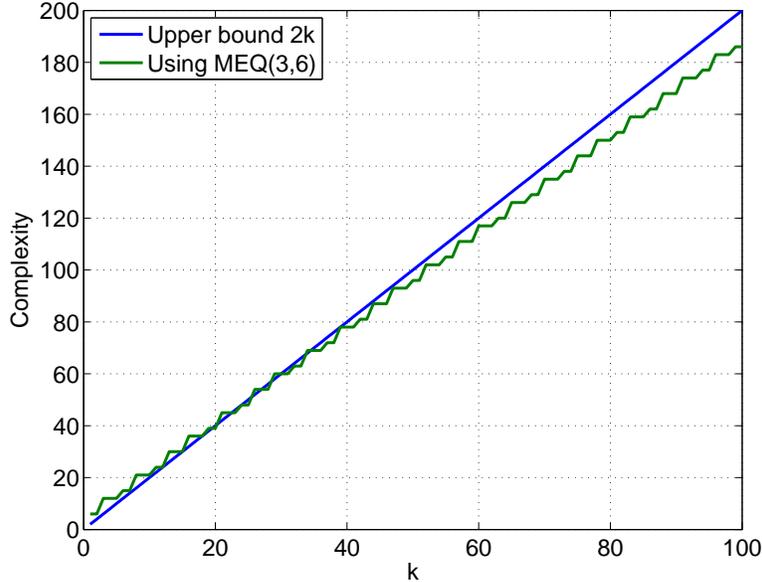}
\caption{Complexity of the proposed protocol v.s. upper bound 2k}
\label{fig:complexity}
\end{figure}
From Eq.\ref{eq:approx}, we can see that when $k$ is  large enough, the communication complexity this protocol becomes smaller than the upper bound $2\log_2 M = 2k$ from Section \ref{sec:upper}.  In Fig.\ref{fig:complexity}, we plot the communication complexity of this protocol, according to Eq.\ref{eq:ceiled}. It is easy to see that for $k>39$, $C(P) < 2k$.

The way in which the above protocol is constructed can be generalized to obtain a MEQ-AD(3,$M$) protocol $P$ with complexity 
\begin{equation}
C(P) < 1.840\log_2 M + \Delta\label{eq:approx_M}
\end{equation} 
for arbitrary value of $M$, where $\Delta$ is some positive constant.

\section{About MEQ-CD}\label{sec:MEQ-CD}
In this section, we will show that $C_{CD}(n,M)$ {\em roughly} equals to $C_{AD}(n,M)$:
\begin{equation}
C_{AD}(n,M)\le C_{CD}(n,M) \le C_{AD}(n,M)+n-2.
\end{equation}
We have shown the lower bound in Section \ref{sec:problem}. We will now prove the upper bound. 

Consider any partially ordered iid protocol $P\in \Gamma_{AD}(n,M)$ as described in Lemma \ref{lm:ordered}. We construct a protocol $P'$ by having node $i$ to send $EQ_i$ to node $n$ by the end of $P$, for all $1<i<n$. Node $n$ collects $n-1$ decisions (including $EQ_n$ as computed in $P$) from all nodes except for node 1. Then node $n$ compute the final decision
\begin{equation}
EQ'_n = \max\{EQ_2,\cdots,EQ_n\}.
\end{equation}
It is easy to see that, $EQ'_n = EQ(x_1,\cdots, x_n)$. So $P'\in \Gamma_{CD}(n,M)$. Since $C(P') = C(P) + n-2$, the upper bound is proved. From Eq.\ref{eq:approx_M} it then follows that there exist a protocol $P'$ that solves MEQ-CD(3,$M$) with complexity $C(P')\le 2\log_2 M$, for large enough $M$.

\section{Conclusion}
In this report, we study the  communication complexity problem of the multiparty equality function, under the point-to-point communication model. We demonstrate that traditional techniques generalized from two-party communication complexity problem are not sufficient to obtain tight bounds under the point-to-point communication model. We then introduce techniques to transform any MEQ-AD protocol into a equivalent partially ordered iid protocol. These techniques significantly reduce the space of MEQ-AD protocols to study. We then study the MEQ-AD(3,6) problem and introduce an optimal protocol that achieves $C_{AD}(3,6)$. This protocol is then used as building blocks for construction of efficient protocols for MEQ-AD(3,$6^h$) and MEQ-AD(3,$2^k$). The problem of finding the communication complexity of the MEQ problem for general values of $n$ and $M$ is still open.

\bibliographystyle{abbrv}
\bibliography{PaperList}

\appendix
\section{Edge Coloring Representation of MEQ-AD(3,$M$)}\label{app:edge-coloring}
From Sections \ref{sec:eq_protocol} and \ref{sec:MEQ(3,6)}, we have shown that it is sufficient to study 3-node systems where information is transmitted only on links AB, AC and BC. Let us denote $|s_{AB}|$, $|s_{AC}|$ and $|s_{BC}|$ as the number of different symbols being transmitted on links AB, AC and BC, respectively. Now consider the following simple bipartite graph $G(U,V,E)$, where $U$ and $V$ are the two disjoint sets of vertices and $E$ is the set of edges:
\begin{itemize}
\item $|U| = |s_{AB}|$, each vertex is labeled as $U_{s_{AB}(x)}$ for all $M$ values of $x$;
\item $|V| = |s_{AC}|$, each vertex is labeled as $V_{s_{AC}(x)}$ for all $M$ values of $x$;
\item $e_{ij} = (U_i,V_j) \in E$ if and only if $i=s_{AB}(x)$ and $j = s_{AC}(x)$ for some $x$.
\end{itemize}
In essence, each vertex $U_i$ (or $V_i$) represents the set of value $x$'s that produce the same value $s_{AB}(x)=i$ (or $s_{AC}(x) = i$); and each edge $e_{ij} = (U_i,V_j)$ represents the set of value $x$'s that produces the same pair of channel symbols $s_{AB}(x)=i$ and $s_{AC}(x)=j$. Let $|e_{ij}|$ be the size set of value $x$'s  corresponding to edge $e{ij}$. Fig.\ref{fig:bipartite} shows the bipartite graph corresponding to the MEQ-AD(3,6) protocol we introduced in Section \ref{sec:MEQ(3,6)}. Near the nodes $U_i$ and $V_i$ we show the set of value $x$'s such that $s_{AB}(x)=i$ and $s_{AC}(x)=i$, respectively. The number near each edges is the corresponding value of that edge.

\begin{figure}[t]
\centering
\includegraphics{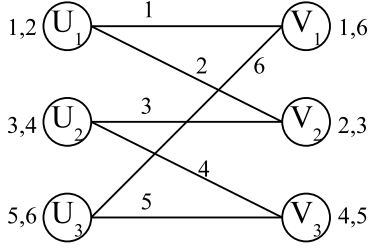}
\caption{The bipartite graph corresponding to the MEQ-AD(3,6) protocol from Section \ref{sec:MEQ(3,6)}.}
\label{fig:bipartite}
\end{figure}

We first argue that

\begin{lemma}\label{lm:bipartite}
$|e_{ij}| = 1$ for all $e_{ij}\in E$. Hence $|E|=M$ and $|U|\times|V|\ge M$.
\end{lemma}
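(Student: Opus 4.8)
The plan is to argue by contradiction, using a fooling-set / indistinguishability argument tailored to the canonical three-node structure. First I would recall that, by the reductions of Section~\ref{sec:eq_protocol}, it suffices to consider a protocol in which node $A$ sends $s_{AB}(x_A)$ to $B$ and $s_{AC}(x_A)$ to $C$, node $B$ sends $s_{BC}(x_B)$ to $C$, and nothing else is transmitted; moreover $A$ has no incoming link, so $EQ_A$ is a function of $x_A$ alone, and since the all-equal configuration $(a,a,a)$ forces $EQ_A=0$ for every $a$, node $A$ never detects a mismatch. Hence a mismatch, whenever it occurs, must be detected by $B$ or $C$.

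Next, suppose for contradiction that $|e_{ij}|\ge 2$ for some edge $e_{ij}\in E$, i.e.\ there are distinct values $x\neq x'$ with $s_{AB}(x)=s_{AB}(x')=i$ and $s_{AC}(x)=s_{AC}(x')=j$. I would then run the protocol on the input $(x_A,x_B,x_C)=(x,x',x')$ and compare it against the (accepting) run on $(x',x',x')$. In the first run $B$ has input $x'$ and receives $s_{AB}(x)=i=s_{AB}(x')$ from $A$ — exactly the symbol it receives on input $(x',x',x')$ — so $B$ is in the same local state, sends the same symbol $s_{BC}(x')$ to $C$, and outputs the same decision $EQ_B=0$. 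Likewise $C$ has input $x'$ and receives $s_{AC}(x)=j=s_{AC}(x')$ from $A$ together with $s_{BC}(x')$ from $B$, precisely what it receives on input $(x',x',x')$, so $EQ_C=0$. Since also $EQ_A=0$, no node detects a mismatch on $(x,x',x')$, contradicting correctness because $x\neq x'$ forces $EQ(x,x',x')=1$. Therefore $|e_{ij}|\le 1$ for every edge.

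Finally, by the definition of $E$ each $e_{ij}\in E$ has $|e_{ij}|\ge 1$, so $|e_{ij}|=1$ for all $e_{ij}\in E$; consequently the map $x\mapsto e_{\,s_{AB}(x),\,s_{AC}(x)}$ is a bijection from $\{1,\dots,M\}$ onto $E$, giving $|E|=M$, and since $E\subseteq U\times V$ we obtain $|U|\cdot|V|\ge |E|=M$. I expect the only delicate point to be the indistinguishability step: one must be certain that the behaviour of $B$ and $C$ (the symbols they send and the decisions $EQ_B,EQ_C$) depends only on their own input and on the symbols actually received, which is exactly what the canonical iid, partially ordered form guarantees; the remainder is bookkeeping.
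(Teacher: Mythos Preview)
Your proposal is correct and follows essentially the same fooling-set/indistinguishability argument as the paper's own proof. The only cosmetic difference is the choice of bad input: the paper compares $(x',x,x)$ with $(x,x,x)$ while you compare $(x,x',x')$ with $(x',x',x')$, but these are symmetric; your write-up is in fact more explicit than the paper about why $A$ can never detect and about how the iid partially ordered form justifies the locality of $B$'s and $C$'s behaviour.
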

\begin{proof}
Suppose to the contrary that there exists some $e_{ij}\in E$ with $|e_{ij}|\ge 2$. Then there must be two values $x,x'$ such that  $x\neq x'$, $s_{AB}(x)= s_{AB}(x')$ and $s_{AC}(x)= s_{AC}(x')$. Similar to the ``fooling set'' argument in Section \ref{sec:lower}, it is impossible for nodes B and C to tell the difference between the two input vectors $(x,x,x)$ and $(x',x,x)$, hence they will not be able to solve the MEQ-AD(3,$M$) problem, which leads to a contradiction. Then the first part of the lemma follows.

Since every edge represents one $x$, and there are $M$ possible values of $x$, it follows that $|E| = M$.
Also, in a simple bipartite graph, we always have $|U|\times|V|\ge |E|$. Thus $|U|\times|V|\ge M$.

\end{proof}

Since there is a one-to-one mapping from the set of input values $\{1,\cdots,M\}$ to the edges $E$, we will use the terms input value ($x$) and edge ($e_{ij}$) interchangeably. Now we prove the following theorem on the constraint of $s_{BC}$:

\begin{lemma}\label{lm:distance2}
$s_{BC}(x)\neq s_{BC}(x')$ if edges $x$ and $x'$ are adjacent or there is some other edge that is adjacent to both of them, in the bipartite graph $G(U,V,E)$.
\end{lemma}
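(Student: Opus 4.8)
The plan is to turn the correctness requirement of the protocol into a statement purely about the three symbol functions $s_{AB},s_{AC},s_{BC}$, and then, whenever the claimed constraint on $s_{BC}$ fails, exhibit an explicit non-constant input on which no node detects a mismatch. Recall from Sections \ref{sec:eq_protocol} and \ref{sec:MEQ(3,6)} that in the canonical form node A has no incoming link (so $EQ_A\equiv 0$), node B detects a mismatch exactly when $s_{AB}(x_A)\neq s_{AB}(x_B)$, and node C detects a mismatch exactly when $s_{AC}(x_A)\neq s_{AC}(x_C)$ or $s_{BC}(x_B)\neq s_{BC}(x_C)$. Hence the protocol is correct if and only if the following holds: $s_{AB}(x_A)=s_{AB}(x_B)$, $s_{AC}(x_A)=s_{AC}(x_C)$ and $s_{BC}(x_B)=s_{BC}(x_C)$ together imply $x_A=x_B=x_C$.

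Assume for contradiction that two distinct edges $x\neq x'$ satisfy the hypothesis of the lemma yet $s_{BC}(x)=s_{BC}(x')$. First suppose $x$ and $x'$ are adjacent. Being distinct edges of the (simple, by Lemma \ref{lm:bipartite}) bipartite graph, they share either their $U$-endpoint, so $s_{AB}(x)=s_{AB}(x')$, or their $V$-endpoint, so $s_{AC}(x)=s_{AC}(x')$. In the first case feed the protocol the input triple $(x_A,x_B,x_C)=(x,x',x)$; in the second case $(x,x,x')$. A one-line check in each case shows that all three equalities $s_{AB}(x_A)=s_{AB}(x_B)$, $s_{AC}(x_A)=s_{AC}(x_C)$, $s_{BC}(x_B)=s_{BC}(x_C)$ hold (the last using $s_{BC}(x)=s_{BC}(x')$), so correctness forces $x_A=x_B=x_C$, contradicting $x\neq x'$.

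If instead $x$ and $x'$ are not adjacent but a third edge $x''\notin\{x,x'\}$ is adjacent to both, observe that if $x''$ meets $x$ and $x'$ on the same side then $x$ and $x'$ already share that vertex and we are back in the adjacent case; so, up to swapping $x$ and $x'$, the only genuinely new configuration is $s_{AB}(x'')=s_{AB}(x)$ and $s_{AC}(x'')=s_{AC}(x')$. Now feed the protocol $(x_A,x_B,x_C)=(x'',x,x')$: then $s_{AB}(x_A)=s_{AB}(x'')=s_{AB}(x)=s_{AB}(x_B)$, $s_{AC}(x_A)=s_{AC}(x'')=s_{AC}(x')=s_{AC}(x_C)$, and $s_{BC}(x_B)=s_{BC}(x)=s_{BC}(x')=s_{BC}(x_C)$, so correctness forces $x''=x=x'$, again a contradiction.

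I expect the main obstacle to be bookkeeping rather than depth: getting the reformulation of correctness exactly right (in particular that node A can never contribute a detection), and organizing the case analysis so that a common neighbour $x''$ is seen either to collapse to the adjacency case or to give precisely one new ``distance-two'' configuration. The fact that an edge is uniquely determined by its two endpoints --- i.e. the bijection $x\leftrightarrow e_{ij}$ of Lemma \ref{lm:bipartite} --- is what makes the three constructed triples well defined and is used silently throughout.
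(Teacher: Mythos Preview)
Your proof is correct and follows essentially the same idea as the paper's: construct an input triple $(x_A,x_B,x_C)$ on which nodes B and C see exactly what they would see on an all-equal input, forcing a contradiction if $s_{BC}(x)=s_{BC}(x')$. The only cosmetic difference is that the paper packages all your cases into one by observing that ``$x,x'$ adjacent or sharing a neighbour'' is (up to swapping $x$ and $x'$) the same as ``the edge $(U_{s_{AB}(x)},V_{s_{AC}(x')})$ lies in $E$'', calling the corresponding input value $x^*$, and using the single triple $(x^*,x,x')$; your separate triples $(x,x',x)$, $(x,x,x')$, $(x'',x,x')$ are exactly the instances $x^*=x'$, $x^*=x$, $x^*=x''$ of this.
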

\begin{proof}
Consider any pairs $x,x'$ such that $x\neq x'$ and $(U_{s_{AB}(x)},V_{s_{AC}(x')})\in E$. Let $x^*$ (maybe equal to $x$ or $x'$) be the input value that edge $(U_{s_{AB}(x)},V_{s_{AC}(x')})$ corresponds to. So we have $s_{AB}(x^*) = s_{AB}(x)$ and $s_{AC}(x^*) = s_{AC}(x')$. Now consider the input vector $(x^*,x,x')$  at node A, B and C. Since $s_{AB}(x^*) = s_{AB}(x)$, node B can not differentiate $(x^*,x,x')$ from $(x,x,x)$. So node B cannot detect the mismatch. Meanwhile, since $s_{AC}(x^*) = s_{AC}(x')$, node C can not differentiate $(x^*,x,x')$ from $(x',x',x')$ by just looking into the receives $s_{AC}$. So $s_{BC}(x)$ must be different from $s_{BC}(x')$, otherwise the MEQ-AD problem is not solved. Then the lemma follows. 
\end{proof}

Now we can conclude that the problem of designing $s_{BC}$, given functions $s_{AB}(\cdot)$ and $s_{AC}(\cdot)$, is equivalent to finding a distance-2 edge coloring for the corresponding bipartite graph $G(U,V,E)$. Furthermore, it should not be hard to see that any protocol $P$ that solves MEQ-AD(3,$M$) is equivalent to a bipartite graph $G(U,V,E)$ together with a distance-2 coloring scheme $W$ such that $|U|= |s_{AB}|$, $|V|=|s_{AC}|$, $|E|= M$, and $|W| = |s_{BC}|$, where $|W|$ denotes the number of colors in scheme $W$. Notice that
\begin{equation}
C(P) = \log_2|s_{AB}| + \log_2|s_{AC}| + \log_2|s_{BC}| 
= \log_2(|s_{AB}|\times |s_{AC}| \times |s_{BC}|).
\end{equation}
So we have the following theorem:
\begin{theorem}\label{thm:bipartite}
The existence of a MEQ-AD(3,$M$) protocol $P$ with complexity $C(P)$ is equivalent to the existence of a simple bipartite graph $G(U,V,E)$ together with a distance-2 coloring scheme $W$ such that $|U|\times |V| \times |W| = 2^{C(P)}$, given $|E| = M$, $|U|\times |V| \ge M$, $|U|\times |W| \ge M$ and $|V|\times |W|\ge M$.
\end{theorem}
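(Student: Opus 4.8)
The plan is to prove the two implications separately, reusing the constructions and fooling-set arguments already developed in Lemmas~\ref{lm:ordered}, \ref{lm:bipartite} and \ref{lm:distance2}.

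\emph{From a protocol to a graph.} Given $P\in\Gamma_{AD}(3,M)$, I apply Lemma~\ref{lm:ordered} to pass to an equivalent iid, partially ordered protocol $P^*$ on nodes relabeled $A,B,C$ whose only transmissions are on the links $AB$, $AC$, $BC$. Because $P^*$ is iid, all symbols ever sent on a link form a single function of the sender's input, which I bundle into $s_{AB}(x_A)$, $s_{AC}(x_A)$, $s_{BC}(x_B)$; additivity of $C(\cdot)$ then gives $C(P)=C(P^*)=\log_2(|s_{AB}|\,|s_{AC}|\,|s_{BC}|)$. I form the bipartite graph $G(U,V,E)$ as in the appendix. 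Lemma~\ref{lm:bipartite} supplies $|E|=M$ and $|U|\times|V|\ge M$, and Lemma~\ref{lm:distance2} says the colouring $W$ (the edge $\leftrightarrow$ value $\leftrightarrow$ $s_{BC}$ identification, so $|W|=|s_{BC}|$) is a distance-2 colouring. What is left is the two inequalities $|U|\times|W|\ge M$ and $|V|\times|W|\ge M$, i.e.\ injectivity of $x\mapsto(s_{AB}(x),s_{BC}(x))$ and of $x\mapsto(s_{AC}(x),s_{BC}(x))$. Each follows from one more fooling-set argument: if $x\neq x'$ with $s_{AB}(x)=s_{AB}(x')$ and $s_{BC}(x)=s_{BC}(x')$, then on the input vector $(x',x,x')$ node $B$ has exactly the view it would have on $(x,x,x)$ and node $C$ exactly the view it would have on $(x',x',x')$, so no node detects although the inputs differ --- a contradiction; the other inequality is symmetric, using $(x',x',x)$. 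Setting $|U|\times|V|\times|W|=2^{C(P)}$ finishes this direction.

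\emph{From a graph to a protocol.} Given $G(U,V,E)$ with $|E|=M$ and a distance-2 colouring $W$ (WLOG no isolated vertices and every colour used, so the alphabets have sizes exactly $|U|,|V|,|W|$), I fix a bijection between the $M$ inputs and the $M$ edges; for the input $x$ identified with $e_{ij}$ I set $s_{AB}(x)=i$, $s_{AC}(x)=j$, $s_{BC}(x)=W(e_{ij})$, and run the protocol in which $A$ sends $s_{AB}(x_A)$ to $B$ and $s_{AC}(x_A)$ to $C$, $B$ sends $s_{BC}(x_B)$ to $C$, node $A$ outputs $0$, node $B$ detects iff its received symbol differs from $s_{AB}(x_B)$, and node $C$ detects iff a received symbol differs from the matching $s_{AC}(x_C)$ or $s_{BC}(x_C)$. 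Its complexity is $\log_2(|U|\times|V|\times|W|)=C(P)$ by hypothesis. For correctness, suppose the protocol fails on an input that is not all-equal: $B$ staying silent forces $x_A$ and $x_B$ to share their $U$-endpoint, while $C$ staying silent forces $x_A$ and $x_C$ to share their $V$-endpoint and forces $W(x_B)=W(x_C)$; hence the edge $x_A$ is adjacent to both $x_B$ and $x_C$, so by the distance-2 property $x_B=x_C$, and then the two shared endpoints force $x_A=x_B=x_C$, a contradiction. (The degenerate cases $x_A=x_B$ or $x_A=x_C$ are closed by the adjacency part of the distance-2 condition.)

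The routine parts --- bundling symbols and checking additivity, the trivial $|U|\times|V|\ge M$ from simplicity of the graph, the arithmetic $|s_{AB}|\,|s_{AC}|\,|s_{BC}|=2^{C(P)}$ --- require no real work. The step I expect to be the main obstacle is the correctness check in the graph-to-protocol direction: one must verify that the distance-2 colouring condition of Lemma~\ref{lm:distance2}, exactly as stated (adjacent edges, or edges sharing a common adjacent edge), is strong enough to exclude every way three not-all-equal inputs could slip past both $B$ and $C$, boundary cases included --- that is, that the full converse of Lemmas~\ref{lm:bipartite} and \ref{lm:distance2} holds, not just the direction proved in the appendix.
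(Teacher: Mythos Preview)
Your proof is correct and follows the same overall structure as the paper: build the bipartite graph from the iid partially ordered form of the protocol, invoke Lemmas~\ref{lm:bipartite} and~\ref{lm:distance2} for $|E|=M$, $|U|\times|V|\ge M$ and the distance-2 property, and identify $|U|\times|V|\times|W|$ with $2^{C(P)}$. The graph-to-protocol direction you spell out in detail is left essentially implicit in the paper (``it should not be hard to see\ldots''), and your correctness check via the case analysis on which of $x_A,x_B,x_C$ coincide is fine.

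The one genuine difference is how you obtain $|U|\times|W|\ge M$ and $|V|\times|W|\ge M$. You argue each directly by a fresh fooling-set argument (e.g.\ the input $(x',x,x')$ when $s_{AB}(x)=s_{AB}(x')$ and $s_{BC}(x)=s_{BC}(x')$). The paper instead appeals to the link-flipping symmetry of Lemma~\ref{lm:flip_direction}: flip directions so that a different node plays the role of the ``source'', and then Lemma~\ref{lm:bipartite} applied to the new ordering immediately yields the remaining two product bounds. The paper's route is shorter and highlights that the three inequalities are really the same statement under the three cyclic orderings of $A,B,C$; your route is more self-contained and avoids any worry about whether flipping preserves the individual alphabet sizes (the paper only asserts equivalence of total complexity). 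Either argument is perfectly adequate here.
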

\begin{proof}
The last two conditions come from the fact that we can flip the directions of the links and then apply Lemma \ref{lm:bipartite}.
\end{proof}

According to Theorem \ref{thm:bipartite}, we can conclude that the problem of finding $C_{AD}(3,M)$ is equivalent to the problem of finding the minimum of $|U|\times |V| \times |W|$ for the bipartite graphs and distance-2 coloring schemes that satisfy the above constrains.

Using Theorem \ref{thm:bipartite}, to show that $C_{AD}(3,4) = 4$, we only need to show that for every combination of $|U|\times |V| \times |W|<2^4=16$ there  exists no bipartite graph $G(U,V,E)$ and distance-2 coloring scheme $W$ that satisfy the conditions as described in Theorem \ref{thm:bipartite}. In other words, if the conditions are all satisfied, then the bipartite graph $G(U,V,E)$ cannot be distance-2 colored with $|W|$ colors. It is not hard to see that there are only two combinations (up to permutation) that satisfy all conditions and have product less than 16: (2, 2, 2) and (2, 2, 3). Notice that in both cases, $|E|=|U|\times|V|$, where every pair of edges are within distance of 2 of each other, which means graph $G$ can only be distance-2 colored with at least $|E|$ colors. Together with the upper bound from Section \ref{sec:upper}, this proves that $C_{AD}(3,4) = C_{CD}(3,4)= 4$.

Similarly, it can be shown that $C_{AD}(3,6) = \log_2 27$. There are only two combinations that satisfy all conditions in Theorem \ref{thm:bipartite} and have product less than 27: (2, 3, 3) and (2, 3, 4). Again, $|E| = |U|\times |V|$, so at least $|E|=6$ colors are needed, which proves that $C_{AD}(3,6) = \log_2 27$.

%
%

\end{document}